\def\b1{\mbox{\boldmath $1$}}
\theoremstyle{plain} \newtheorem{thm}{\bf Theorem}[section]
 \newtheorem{lem}[thm]{\bf Lemma}
 \newtheorem{defn}[thm]{\bf
  Definition} \theoremstyle{remark}
\makeatletter \@addtoreset{equation}{section} \makeatother \makeatletter
\begin{document}
\date{}

\title{  Optimal investment-reinsurance policy under a long-term perspective}

\author{XiaoXiao Zheng\thanks{School of Mathematical Sciences and LPMC, Nankai University}, Xin Zhang\thanks{School of Mathematical Sciences and LPMC, Nankai University, Tianjin 300071, P.R. China; E-mail: nku.x.zhang@gmail.com} } \maketitle

\noindent{\bf Abstract}
In this paper, we assume an insure is allowed to purchase proportional reinsurance and can invest his or her wealth into the financial market where a savings account, stocks and bonds are available. Different from classical optimal investment and reinsurance problem, this paper studies the insurer's long-term investment decision. Under this setting, our model consider the interest risk and the inflation risk. Specifically, we suppose the interest rate follows a stochastic process, while price index is described by a classical model. By solving Hamilton-Jacobi-Bellman equation, the closed-form expression of the optimal policy is obtained. Further, we prove the corresponding verification theorem without the usual Lipschitz condition. In the end, numerical examples are made to illustrate the difference of the optimal polices under Ho-lee model and Vasicek model.
\noindent
\\

\noindent {\it Keywords:} stochastic interest rate; proportional reinsurance; price index; HJB equation; verification theorem

\newpage

\section{Introduction}
Economist Merton pioneered in the study of continuous-time portfolio problem (see \citet{merton1969lifetime,merton1971optimum}). Base on his notable works, a large number of investment problems have been studied, and a series of classical papers have came out. \citet{davis1990portfolio} is about optimal consumption and investment decision for an investor who invests his or her wealth into a bank account and a stock, in addition, in this paper the author also think of the fixed percentage transaction costs. Life-cycle model of consumption and portfolio was considered by \citet{cocco2005consumption}. \citet{kraft2005optimal} studied the portfolio problem with stochastic volatility. Here we cannot list every important literatures in this area, but these classical papers constitute the cornerstone of the development of continuous-time finance.

Over the past several years, long-term investing strategy has been a hot issue. Many financial gurus advocate that the investor should practise long-term investment. In the famous book \textbf{Strategic Asset Allocation}, campbell also studies the asset allocation decisions for the investor who want to invest in a long term. For an insurance company, it is important to conduct a prudent investment strategy, that is, they not only should diversify risk by investing their money into different assets, but also should practise long-term investment, and our paper is aiming at designing a  portfolio for insurance company. Under the setting of long-term investing, inflation is a main risk factor, thus we have to take consumer price index(CPI) into account. Different papers provide different models to describe such risk, \citet{pearson1994exploiting} shows that the inflation follows a " mean-reverting square-root " process, while in \citet{munk2004dynamic} and \citet{brennan2002dynamic} the inflation process is given by Ornstein-Uhlenbeck process. In our paper, we will choose the second model to characterize the inflation. On the other hand, the interest risk is another important factor we have to focus. In contrast to most classical investment problems, papers in this area assume interest rate follows a stochastic process. Of all papers concern this subject, \citet{korn2002stochastic} is a classical one. In this instructive paper, a investor invest his or her wealth into a bond and a bank account with stochastic interest rate. Besides, there are some papers also focus on this topic: In \citet{li2009optimal} stochastic interest rate is given by Cox-Ingersoll-Ross(CIR) model and the volatility of the stock is also a CIR process. Another paper is \citet{kraft2009optimal}, considering all common short rate models and stochastic discount.

No doubt, financial market is abundant and insurance company plays an important role. In recent years, many scholars were devoted to the study of the investment for insurance company. From optimal Mean-Variance problem to maximizing utility function problem, more and more elegant results were made. \citet{bai2008dynamic} consider the optimal investment and optimal reinsurance for an insurer under the criterion of mean-variance. In \citet{zhang2012optimal}, the author discuss the problem of optimal proportional reinsurance and investment under the criterion of maximizing utility function on terminal wealth.

As far as I know, however, there are seldom literatures touch on the problem of optimal investment and optimal reinsurance for an insurer under stochastic interest rate. In our paper we consider an insurer invest his or her wealth into financial market in which saving account, stock and bond are available. We aim at maximizing the terminal power utility function. For Ho-lee model and Vasicek model, not only do we obtain the closed-form expression of their optimal policies, but also we compare their result through numerical example. Because in our model  Lipschitz condition and growth condition are not satisfied, we prove the verification theorem through a way which is different from the one used in the standard verification theorem.

The rest of this paper is organized as follows. In section 2, we introduce models and some assumptions. Section 3 formulate the optimization problem and then by solving HJB equation we obtained optimal strategies for both Ho-lee model and Vasicek model. In section 4, verification theorem was given. Section 5 provides numerical comparison and analysis.

\section{Model and Assumption }
Let $(\Omega,\mathcal{F},\mathbb{F}, \mathcal{P})$ be a complete filtered probability space, where  $\mathbb{F}=\{\mathcal{F}_t\}_{t\in [0,T]}$ is a right continuous, $\mathcal{P}$-complete filtration to which all of the processes defined below are adapted. 
\subsection{Price index}
In this paper, we try to solve a long-term (10 years or more) optimal reinsurance and  portfolio choice problem for an insurer. When investing for long-term goals, inflation risk is an important factor that can affect the overall performance of our investment.  In economics, the inflation is essentially the  sustained increase in the general price level of goods and services over a period of time. In fact, the inflation reflects a reduction of the purchasing power since when the general price level rises, each unit of currency buys fewer goods and services. Therefore, the accumulated inflation can lead to huge shrinkage in the wealth of our investment. A chief measure of the inflation is the inflation rate (the annualized percentage change in a general price index over time). 
Recently, many researchers studied the portfolio choice problems under the influence of inflation, see for example: \citet{munk2004dynamic,pearson1994exploiting,brennan2002dynamic}.


As adopted in \citet{munk2004dynamic} and \citet{yao2013markowitz}, the dynamic of the nominal price index of the consumption good in the economy is modeled  by  the following stochastic differential equation (SDE for short):
\begin{eqnarray}
  d\Pi(t)=\Pi(t)\big[I(t)dt+\sigma_0(t)dW_0(t)\big], \mbox{ for all } t\in [0,T], \quad \Pi(0)=\Pi_0,
\end{eqnarray}
where $W_0(t)$ is a one dimensional standard Brownian motion, $\sigma_0(t)$ is the volatility of the price index,  and $I(t)$ is the instantaneous expected inflation rate following an time-dependent Ornstein-Uhlenbeck (O-U for short) process
\begin{eqnarray}
dI(t)=\beta(t)\big[\alpha(t)-I(t)\big]dt + \bar \sigma_0(t)dW_0(t),
\end{eqnarray}
where $\alpha(t)$ describes the long-run mean of the inflation rate, $\beta(t)$ descries the degree of mean-reversion and $ \bar \sigma_0(t)$ reflects the volatility of the inflation rate.  Moreover we suppose that  $\sigma_0(t)$, $\alpha(t)$, $\beta(t)$, $ \bar \sigma_0(t)$ are deterministic and continuous function of time $t$.  In our model, we adopt time-dependent O-U process instead of the one introduced in \citet{munk2004dynamic} or \citet{yao2013markowitz}.

\subsection{Financial market}
Assume that the financial market considered here consists of three assets: one savings account, one stock and one zero-coupon bond with maturity $T_1>T$.

Let $B(t)$ denote the price process of the savings account and assume that  the evolution of $B(t)$ is determined by:
\begin{eqnarray}
dB(t)= r(t)B(t)dt
\end{eqnarray}
with $B(0)=1$. Here, $r(t)$ is short-term interest rate described by
\begin{eqnarray}\label{eq:rt}
dr(t)=a(t)dt+bdW_1(t), t\in [0, T_1], \quad r(0)=r_0
\end{eqnarray}
where $b$ is a positive constant and  $W_1(t)$ is a one dimensional standard Brownian motion. As explicit examples introduced in \citet{korn2002stochastic}, we consider the Ho-Lee model and the Vasicek model, where  $a(t)$ is respectively given by $a(t)=\widetilde a(t)+ b\xi(t)$ and  $ a(t)=\theta(t)-\hat br(t)+ b \xi(t)$. The risk premium $\xi(t)$ is assumed to be deterministic and continuous.

Let $P(t,T_1)$ denote price process of the zero-coupon bond with maturity $ T_1>T $. Then from \citet{korn2002stochastic}, $P(t,T_1)$ satisfies the following SDE,
 \begin{eqnarray}
   dP(t,T_1)=P(t,T_1)\bigg\{\big[r(t)+\xi(t)\sigma_1(t)\big]dt+\sigma_1(t)dW_1(t)\bigg\}, \quad P(0,T_1)=P_0>0,
\end{eqnarray}
where $r(t)$ is given by \eqref{eq:rt} and $\xi(t)$, $\sigma_1(t)$ are deterministic and continuous functions. In what follows,  we shall write $P(t)$ instead of $P(t,T_1)$ for ease of notation. As shown in \citet{korn2002stochastic}, the volatilities of the zero-coupon bond for the Ho-Lee model and Vasicek model are given by $\sigma(t)=-b(T_1-t)$ and $\sigma(t)= \frac {b} {\hat b} [exp\{-\hat b(T_1-t)\}-1]$ respectively.

In addition, we assume that the dynamics of the stock price is described by
 \begin{eqnarray*}
dS(t)=S(t)\big[\mu(t)dt+\sigma_2(t)dW_2(t)\big], \quad S(0)=S_0,
\end{eqnarray*} where ${W_2(t)}$ is one dimension Brownian motion and $\sigma_2(t)$ is deterministic and continuous. Similar to \citet{korn2002stochastic}, we also split up the drift $\mu(t)$ of the stock into a liquidity premium (LP) and a  risk premium (RP) :
\[\mu(t)=\underbrace{r(t)}_{LP}+\underbrace{\mu(t)-r(t)}_{RP}.\]
Let $\lambda(t)$ denote the risk premium of the stock, i.e. $\lambda(t):=\mu(t)-r(t)$. Thus the price process of the stock can be rewritten as
 \begin{eqnarray}
dS(t)=S(t)\bigg\{\big[r(t)+\lambda(t)\big]dt+\sigma_2(t)dW_2(t)\bigg\}.
\end{eqnarray}

\subsection{Surplus process  }
Let $\tilde{R}(t)$ be the real value of the surplus of the insurance company (the value that exclude the impact of price index and inflation). For simplicity, and without loss of generality, we assume that the dynamics of $\tilde{R}(t)$ is described by the following diffusion approximation (DA) model(see \citet{grandell1991aspects},\citet{zeng2011optimal}. 
\begin{eqnarray}
  d\tilde{R}(t)=c(t)dt+\sigma_3(t)dW_3(t),  \quad \quad \quad \quad \tilde{R}(0)=\tilde{R}_0,
\end{eqnarray}
where $\{W_3(t)\}$ is a one dimensional standard Brownian motion, $c(t)>0$ and $\sigma_3(t)$ are deterministic and continuous.  In reality, $c(t)$ and $\sigma_3(t)$ can be regarded as the real premium rate of the insurer and the risk of the insurer respectively. 

Let  $u(t)\in[0,\infty)$ be the proportional reinsurance retention level adopted by the insurance company and assume that the safety loading of the insurance company and the reinsurance company is the same. Thus the surplus process of insurance company after adopting the proportional reinsurance is  
\begin{eqnarray}
d\widetilde R(t)= u(t)c(t)dt+u(t)\sigma_3(t)dW_3(t),  \quad \quad \quad \quad \widetilde R(0)= \widetilde R_0.
\end{eqnarray}

Contrast to other insurance or reinsurance literatures in which they exclude the impact of price index and inflation, we consider the change of surplus according to the price index. Considering long term case, it is natural as well as proper to take the price index into account, since in a long run insurance company need to adjust the amount of claim and premium according to the change of economic condition. Thus we suppose the insurer's surplus process is given by :
\begin{eqnarray}
\nonumber dR(t)&=&\Pi(t)d\widetilde R(t)\\
&=& \Pi(t)u(t)c(t)dt+\Pi(t)u(t)\sigma_3(t)dW_3(t)\ \quad
\end{eqnarray}
with initial data  $R(0)=\Pi(0)\widetilde R(0)$, where $\Pi(t)$ is price index defined in the (2.1).
\\

Finally, we assume the stochastic interest rate, bond price, the price index and the expected inflation rate could be correlated with each other, that is, without loss of generality, we suppose $Cov(W_1(t),W_0(t))=\rho t$,  $\rho\in[-1,1] $. But we assume ${W_2(t)}$, ${W_3(t)}$ are independent Brownian motion with each other and they are independent of ${W_1(t)}$, ${W_0(t)}$. Actually this assumption is proper, since the volatility of claim is not affected by interest rate or price index. For stock, if we suppose the interest rate or price index is correlated with the stock , there will be an additional mixed partial derivative term in HJB equation, but it will not affect the method used in the remainder of the paper.
\subsection{Wealth process}
\quad\quad During the time horizon $[0,T]$ , $T<T_1$, the insurer is allowed to continuously purchase proportional reinsurance and invests all of his (or her) wealth in the financial market. Let the progressively measurable function $\pi_1(t)$, $\pi_2(t)$ are the proportion of the total wealth invested in the bond and stock respectively. Accordingly, $1-\pi_1(t)- \pi_2(t)$ is the proportion of the total wealth invested in the saving account. If we denote by $ \widetilde X(t)$ the wealth of the investor at time $t$ with $ \widetilde X(0) = \widetilde X_0$, then we have
\begin{eqnarray*}
d\widetilde X(t)&=& \pi_1(t) \widetilde X(t) \frac {dP(t)} {P(t)} + \pi_2(t) \widetilde X(t) \frac {dS(t)} {S(t)} + (1-\pi_1(t)- \pi_2(t))\widetilde X(t)\frac {dB(t)} {B(t)}\\
&&+\Pi(t)u(t)c(t)dt+\Pi(t)u(t)\sigma_3(t)dW_3(t)\\
 &=&\widetilde X(t)\bigg\{[r(t)+\pi_1(t)\xi(t)\sigma_1(t)+\pi_2(t)\lambda(t)]dt+ \pi_1(t)\sigma_1(t)dW_1(t) + \pi_2(t)\sigma_2(t)dW_2(t)\bigg\} \\
&&+\Pi(t)u(t)c(t)dt+\Pi(t)u(t)\sigma_3(t)dW_3(t).
\end{eqnarray*}

In the previous section, the nominal price of the real consumption good in the economy at time $t$ is denoted by $\Pi(t)$. The real price of an asset in the economy over a long time is determined by deflating by the price index $\Pi(t)$. The real wealth process including the impact of the inflation is given by $X(t)=\frac{\widetilde X(t) }{\Pi(t)}$. Then by using Ito formula (see \citet{karatzas1991brownian}), we know $X(t)$ follows :
\begin{eqnarray*}
d X(t)&=&\bigg\{X(t)\big[r(t)+\sigma^2_0(t)-I(t)+(\xi(t)\sigma_1(t)-\rho \sigma_1(t) \sigma_0(t))\pi_1(t)+\lambda(t)\pi_2(t)\big]\\
&&+u(t)c(t)\bigg\}dt+u(t)\sigma_3(t)dW_3(t) + X(t)\pi_1(t)\sigma_1(t)dW_1(t)\\
&&+ X(t)\pi_2(t)\sigma_2(t)dW_2(t) - X(t)\sigma_0(t)dW_0(t)
\end{eqnarray*}
with the initial value $X(0)=\frac{\widetilde X(0)}{\Pi(0)}= X_0 $.

For simplicity, let $\eta(t)=\xi(t)-\rho\sigma_0(t)$ and still use symbol $\lambda(t)$ to denote the term $\frac {\lambda(t)}{\sigma_2(t)}$.
 Then we have the final form of the wealth process :
\begin{eqnarray}
\nonumber dX(t)&=&\bigg\{X(t)\big[r(t)+\sigma^2_0(t)-I(t)+\sigma_1(t)\eta(t)\pi_1(t)+\lambda(t)\sigma_2(t)\pi_2(t)\big]\\
&&+u(t)c(t)\bigg\}dt+u(t)\sigma_3(t)dW_3(t) + X(t)\pi_1(t)\sigma_1(t)dW_1(t) \\
\nonumber &&+ X(t)\pi_2(t)\sigma_2(t)dW_2(t) - X(t)\sigma_0(t)dW_0(t)\\
\nonumber \end{eqnarray}
with $X(0)=X_0$.
\\

Finally the admissible control set is given in the definition below.

\begin{defn}

A strategy $ \widetilde\pi(t)=(\pi_1(t), \pi_2(t), u(t)) $ is said to be admissible if $\pi_1(t), \pi_2(t), u(t)$ are progressively measurable processes, and $ \pi_1(t), \pi_2(t) $ are bounded, and $u(t)\geq0$. Further we denote the set of all admissible strategies by $\Theta$.

\end{defn}

\section{Maximizing the expected power utility}
In the setting above, the insurer is interested in choosing a portfolio process to maximize the utility function for his (or  her) terminal wealth .We assume the insurer's preference can be described by a non-log hyperbolic absolute risk aversion (HARA) utility function $U(x)=\frac{1}{p}x^p , 0<p<1,  x>0 $. Obviously, we have $U'>0$ and $U''\leq0$. Now we can formulate the optimization problem :
\[V(t,x,r,I)=\sup_{\widetilde\pi\in\Theta} E \bigg\{ \frac{1}{p}(X^{\widetilde\pi}_T)^p|X_t=x,\quad r_t=r,\quad I_t=I \bigg\},\]
where ${X^{\widetilde\pi}_t}$ is the wealth process under strategy $\widetilde\pi$, and the corresponding state process followed :
\begin{eqnarray}
  \left\{\begin{array}{l}
dX(t)=\bigg\{X(t)\big[r(t)+\sigma^2_0(t)-I(t)+\sigma_1(t)\eta(t)\pi_1(t)+\lambda(t)\sigma_2(t)\pi_2(t)\big]\\
\quad\quad\quad\quad+u(t)c(t)\bigg\}dt+u(t)\sigma_3(t)dW_3(t) + X(t)\pi_1(t)\sigma_1(t)dW_1(t)\\
\quad\quad\quad\quad+ X(t)\pi_2(t)\sigma_2(t)dW_2(t) - X(t)\sigma_0(t)dW_0(t), \quad\quad\quad\quad X(0)=X_0,\\
\\
dr(t)=a(t)dt+bdW_1(t), \quad\quad\quad\quad r(0)=r_0,\\
\\
dI(t)=\beta(t)\big[\alpha(t)-I(t)\big]dt + \bar \sigma_0(t)dW_0(t), \quad\quad\quad\quad I(0)=I_0.
\end{array}\right.
\end{eqnarray}
We define operate:
\begin{eqnarray}
\nonumber \mathcal{A}\psi&=& \nonumber\psi_t+\psi_x\bigg\{x\big[r+\eta\sigma_1\pi_1+\lambda\sigma_2\pi_2-I+\sigma^2_0\big]+uc\bigg\}+\frac{1}{2}\psi_{xx}\big[u^2\sigma^2_3+x^2\pi^2_1\sigma^2_1\\
&&+x^2\pi_2^2\sigma_2^2+x^2\sigma^2_0-2x^2\sigma_1\sigma_0\pi_1\rho\big]+a\psi_r+\frac{1}{2}b^2\psi_{rr}+\psi_I\beta(\alpha-I)
+\frac{1}{2}\bar\sigma^2_0\psi_{II}\\
&&+\big[x\pi_1\sigma_1\bar\sigma_0\rho-x\sigma_0\bar\sigma_0\big]\psi_{xI}+
\nonumber\psi_{xr}\big[x\pi_1\sigma_1b-\rho x\sigma_0b\big]+\psi_{Ir}\rho\bar\sigma_0b.
\end{eqnarray}
Hence the following Hamilton-Jacobi-Bellman equation (HJB) has to be solved :
\begin{eqnarray}
0=\sup_{\widetilde\pi\in\Theta}\bigg\{ \mathcal{A} V(t,x,r,I)\bigg\}.
\end{eqnarray}
Specifically, we can write it explicitly :
\begin{eqnarray}
\nonumber0&=&V_t+V_x\big[r-I+\sigma^2_0(t)\big]x+\frac{1}{2}V_{xx}\sigma^2_0(t)x^2+a(t)V_r+\frac{1}{2}b^2V_{rr}+\beta(t)\big[\alpha(t)-I\big]V_I\\
\nonumber &&+\frac{1}{2}\bar\sigma_0^2(t)V_{II}-\rho\sigma_0(t)bxV_{xr}-\bar\sigma_0(t)\sigma_0(t)xV_{xI}+b\bar\sigma_0(t)\rho V_{rI}+\sup_{|\pi|_1<\delta} \bigg\{\frac{1}{2}V_{xx}\sigma^2_1(t)x^2\pi_1^2\\
&&+\big[V_x\sigma_1(t)\eta(t)x-V_{xx}\rho\sigma_1(t)\sigma_0(t)x^2+V_{xr}b\sigma_1(t)x
+V_{xI}\bar\sigma_0(t)\sigma_1(t)\rho x\big]\pi_1 \bigg\}\\
\nonumber&&+\sup_{|\pi|_2<\delta}\bigg\{\frac{1}{2}V_{xx}\sigma_2^2(t)x^2\pi_2^2+V_x\lambda(t)\sigma_2(t)x\pi_2\bigg\}+\sup_{u\in[0,\infty)}\bigg\{\frac{1}{2}V_{xx}\sigma^2_3(t)u^2+c(t)V_xu\bigg\},\\
\nonumber\\
&&V(T,x,I,r)=\frac{1}{p}x^p,
\end{eqnarray}
where $\delta>0$ will be specified later.

Let us assume that the HJB equation $(3.3)$ has a classical solution G, which satisfies condition $G_x>0$ and $G_{xx}<0$.

We get the following candidate for the optimal bond position:
\begin{eqnarray}
\pi_1^*(t) = -\frac{\eta(t)}{\sigma_1(t)}\frac{V_x}{xV_{xx}}-\frac{\bar\sigma_0(t)\rho}{\sigma_1(t)}\frac{V_{Ix}}{xV_{xx}}-\frac{b}{\sigma_1(t)}\frac{V_{xr}}{xV_{xx}}+\frac{\sigma_0(t)\rho}{\sigma_1(t)},
\end{eqnarray}
the optimal stock position:
\begin{eqnarray}
\pi_2^*(t)=-\frac{V_x}{xV_{xx}}\frac{\lambda(t)}{\sigma_2(t)},
\end{eqnarray}
and the candidate for the optimal reinsurance proportion
\begin{eqnarray}
u^*(t)=-\frac{V_x}{V_{xx}}\frac{c(t)}{\sigma_3^2(t)}.
\end{eqnarray}

From the form of the equation above, we conjecture that the solution $G$ has the form :
\begin{eqnarray}
G(t,x,r,I)=g(t,r,I)\frac{x^p}{p}
\end{eqnarray}
with $g(T,r,I)=1$ for all $I$ and $r$. After simple calculation we obtain these results :
\begin{eqnarray*}
&&G_t=g_t\frac{x^p}{p}, \quad \quad G_x=gx^{p-1}, \quad \quad G_{xx}=g(p-1)x^{p-2}, \quad \quad G_{r}=g_{r}\frac{x^p}{p}, \\
&&G_{rr}=g_{rr}\frac{x^p}{p}, \quad \quad G_{I}=g_{I}\frac{x^p}{p}, \quad \quad G_{II}=g_{II}\frac{x^p}{p}, \quad \quad G_{xr}=g_{r}x^{p-1},\\
&&G_{xI}=g_Ix^{p-1}, \quad \quad G_{rI}=g_{rI}\frac{x^p}{p}.
\end{eqnarray*}
Plug them into (3.4) lead to another equation for $g$ of the form:
\begin{eqnarray}
\nonumber 0&=&\frac{g_t}{p}+g(r+\sigma^2_0-I)+\frac{1}{2}\sigma_0^2g(p-1)+ag_r\frac{1}{p}+\frac{1}{2}b^2\frac{g_{rr}}{p}+\frac{g_I}{p}\beta(\alpha-I)\\
\nonumber &&+\frac{1}{2}\bar\sigma^2_0\frac{g_{II}}{p}-\sigma_0\bar\sigma_0g_I-g_r\rho\sigma_0b+\frac{g_{Ir}}{p}b\rho\bar\sigma_0-\frac{g}{p-1}\frac{\eta^2}{2}-\frac{1}{2}g(p-1)\sigma^2_0\rho^2\\
&&-\frac{g_I^2}{g(p-1)}\frac{\bar\sigma^2_0\rho^2}{2}-\frac{g_r^2}{g(p-1)}\frac{b^2}{2}+g\eta\sigma_0\rho-\frac{g_I}{p-1}\eta\rho\bar\sigma_0-\frac{g_r}{p-1}b\eta\\
\nonumber &&+g_rb\sigma_0\rho+ g_I\bar\sigma_0\sigma_0\rho^2-\frac{g_r g_I}{g(p-1)}\bar\sigma_0\rho b - \frac{g}{p-1}\frac{\lambda^2}{2} - \frac{g}{p-1}\frac{c^2}{2\sigma_3^2}.
\end{eqnarray}

Next we use the ansatz :
\begin{eqnarray}
g(t,r,I)=f(t)e^{k(t)r+z(t)I}
\end{eqnarray}
with terminal value $f(T)=1$,  $k(T)=0$, $z(T)=0$. After simple calculation, we get these results:
\begin{eqnarray*}
&& g_t=f'e^\Delta+fe^\Delta(k'r+z'I), \quad\quad g_r=fke^\Delta, \quad\quad g_I=fz e^\Delta,\\
&& g_{Ir}=fkze^{\Delta}, \quad\quad g_{rr}=fk^2e^\Delta, \quad\quad g_{II}=fz^2e^\Delta,
\end{eqnarray*}
where denote $\Delta:=k(t)r+z(t)I$ for simplicity. Inserting them into (3.10) and simplification yield:
 \begin{eqnarray}
\nonumber 0&=&[\frac{k'}{p}+1]fr+[\frac{z'}{p}-\frac{\beta z}{p}-1]fI + \frac{f'}{p}+\bigg\{\sigma_0^2+\frac{1}{2}(p-1)\sigma_0^2+\frac{b^2k^2}{2p}\\
\nonumber\\
\nonumber&&+\frac{\alpha\beta}{p}z+\frac{\bar\sigma_0^2z^2}{2p}-\bar\sigma_0\sigma_0z-k\rho\sigma_0 b+\frac{b\rho\bar\sigma_0}{p}kz-\frac{\eta^2}{2(p-1)}-\frac{p-1}{2}\rho^2\sigma_0^2\\
\\
\nonumber&&-\frac{\rho^2\bar\sigma^2_0}{2(p-1)}z^2 - \frac{b^2}{2(p-1)}k^2+\sigma_0\rho\eta-\frac{\eta\rho\bar\sigma_0}{p-1}z-\frac{\eta b}{p-1}k+\bar\sigma_0\sigma_0 \rho^2z\\
\nonumber\\
\nonumber&&+\sigma_0b\rho k-\frac{\bar\sigma_0\rho b}{p-1}kz - \frac{\lambda^2}{2(p-1)}-\frac{c^2}{2\sigma_3^2(p-1)}\bigg \}f+\frac{ak}{p}f.
\end{eqnarray}
\\

In the Ho-lee model the drift $a(t)$ of the stochastic interest rate is deterministic and continuous, but because of containing the term $r(t)$, the drift $a(t)$ in the vasicek model is stochastic. We must treat them separately.\\

\emph{\textbf{Ho-lee model}}:\\

In Ho-lee model E.q.(3.12) has the form

  \begin{eqnarray}
\nonumber 0&=&[\frac{k'}{p}+1]fr+[\frac{z'}{p}-\frac{\beta z}{p}-1]fI + \frac{f'}{p}+\bigg\{\sigma_0^2+\frac{1}{2}(p-1)\sigma_0^2+\frac{ak}{p}+\frac{b^2k^2}{2p}\\
\nonumber\\
\nonumber&&+\frac{\alpha\beta}{p}z+\frac{\bar\sigma_0^2z^2}{2p}-\bar\sigma_0\sigma_0z-k\rho\sigma_0 b+\frac{b\rho\bar\sigma_0}{p}kz-\frac{\eta^2}{2(p-1)}-\frac{p-1}{2}\rho^2\sigma_0^2\\
\\
\nonumber&&-\frac{\rho^2\bar\sigma^2_0}{2(p-1)}z^2 - \frac{b^2}{2(p-1)}k^2+\sigma_0\rho\eta-\frac{\eta\rho\bar\sigma_0}{p-1}z-\frac{\eta b}{p-1}k+\bar\sigma_0\sigma_0 \rho^2z\\
\nonumber\\
\nonumber&&+\sigma_0b\rho k-\frac{\bar\sigma_0\rho b}{p-1}kz - \frac{\lambda^2}{2(p-1)}-\frac{c^2}{2\sigma_3^2(p-1)}\bigg \}f.
\end{eqnarray}

We define $h(t):=\bigg\{....\bigg\}$ in the E.q.(3.13). Thus we have to solve three ordinary differential equation(ODE):
\begin{eqnarray}
  \left\{\begin{array}{l}
k'(t)=-p,\\
\\
k(T)=0,
\end{array}\right.
\end{eqnarray}
lead to $k(t)=p(T-t)$.
\begin{eqnarray}
  \left\{\begin{array}{l}
z'(t)-\beta(t)z(t)-p=0,\\
\\
z(T)=0,
\end{array}\right.
\end{eqnarray}
\\
which lead to $ z(t)=-pe^{\int_0^t\beta(s)ds}\int_t^Te^{-\int_0^s\beta(v)dv}ds$.
\\
\begin{eqnarray}
  \left\{\begin{array}{l}
f'(t)+ph(t)f(t)=0,\\
\\
f(T)=1,
\end{array}\right.
\end{eqnarray}
lead to $f(t)=e^{-p[H(t)-H(T)]}$ , where $H(t)$ is a primitive of $h(t)$.

Combining (3.9),(3.11) with the solution of (3.16) , we have a candidate for the optimal value :
\begin{eqnarray}
G(t,x,r,I)=\frac{1}{p}\exp\bigg\{-p\big[H(t)-H(T)\big]\bigg\}exp\bigg\{k(t)r+z(t)I\bigg\}x^p,
\end{eqnarray}
where $k(t)$ and $z(t)$ is the solution of E.q.(3.14) and E.q.(3.15) respectively, and the corresponding optimal policies :
\begin{eqnarray}
  \left\{\begin{array}{l}
u^*(t)=-\frac{c(t)}{\sigma^2_3(t)}\frac{1}{p-1}x,\\
\\
\pi^*_1(t)=-\frac{\eta(t)}{\sigma_1(t)}\frac{1}{p-1}-\frac{b}{\sigma_1(t)}\frac{p}{p-1}(T-t)+\frac{\rho\sigma_0(t)}{\sigma_1(t)}\\
\quad\quad\quad\quad-\frac{\rho\bar\sigma_0(t)}{\sigma_1(t)}\frac{p}{p-1}e^{\int_0^t\beta(s)ds}\int_t^Te^{-\int_0^s\beta(v)dv}ds,\\
\\
\pi_2^*(t)=-\frac{\lambda(t)}{\sigma_2(t)}\frac{1}{p-1}.
\end{array}\right.
\end{eqnarray}
Note that $\lambda(t), \sigma_1(t), \sigma_2(t), \eta(t) $ are all deterministic and continuous on the time interval $[0,T]$, therefor $\pi^*_1(t)$ and $\pi^*_2(t)$ are deterministic and continuous on $[0,T]$, thus obviously they are bounded, and explicitly $\sigma_1(t)=-b(T_1-t) $. Further, because $c(t)>0$ and $\sigma_3^2(t)>0$, we have $u^*(t)>0$, which satisfies the constraint. On the other hand, because of $0<p<1$, we have $V_{xx}<0$, which satisfies the hypothesis ahead.\\

\emph{\textbf{Vasicek model}:}\\

In the Vasicek model, the drift of the stochastic interest rate has the form of $a(t)=\theta(t)-\hat br(t)+b\xi(t)$, then the E.q.(3.12) is equivalent to

  \begin{eqnarray}
\nonumber 0&=&[\frac{k'}{p}-\frac{\hat b}{p}k+1] fr+[\frac{z'}{p}-\frac{\beta z}{p}-1] f I + \frac{ f'}{p}+\bigg\{\sigma_0^2+\frac{1}{2}(p-1)\sigma_0^2+\frac{(\theta+b\xi)k}{p}+\frac{b^2k^2}{2p}\\
\nonumber\\
\nonumber&&+\frac{\alpha\beta}{p}z+\frac{\bar\sigma_0^2z^2}{2p}-\bar\sigma_0\sigma_0z-k\rho\sigma_0 b+\frac{b\rho\bar\sigma_0}{p} kz-\frac{\eta^2}{2(p-1)}-\frac{p-1}{2}\rho^2\sigma_0^2\\
\\
\nonumber&&-\frac{\rho^2\bar\sigma^2_0}{2(p-1)}z^2 - \frac{b^2}{2(p-1)}k^2+\sigma_0\rho\eta-\frac{\eta\rho\bar\sigma_0}{p-1}z-\frac{\eta b}{p-1}k+\bar\sigma_0\sigma_0 \rho^2z\\
\nonumber\\
\nonumber&&+\sigma_0b\rho k-\frac{\bar\sigma_0\rho b}{p-1} kz - \frac{\lambda^2}{2(p-1)}-\frac{c^2}{2\sigma_3^2(p-1)}\bigg \} f.
\end{eqnarray}

We also define $ h(t):=\bigg\{....\bigg\}$ in the E.q.(3.19). Thus we have to solve three ordinary differential equation(ODE):
\begin{eqnarray}
  \left\{\begin{array}{l}
k'(t)- \hat b k(t)+p=0,\\
\\
k(T)=0,
\end{array}\right.
\end{eqnarray}
lead to $k(t)=\frac{p}{\hat b}[1-\exp\{\hat b(t-T)\}]$.
\begin{eqnarray}
  \left\{\begin{array}{l}
z'(t)-\beta(t)z(t)-p=0,\\
\\
z(T)=0,
\end{array}\right.
\end{eqnarray}
\\
which lead to $z(t)=-pe^{\int_0^t\beta(s)ds}\int_t^Te^{-\int_0^s\beta(v)dv}ds$.
\\
\begin{eqnarray}
  \left\{\begin{array}{l}
 f'(t)+ph(t)f(t)=0,\\
\\
f(T)=1,
\end{array}\right.
\end{eqnarray}
lead to $f(t)=e^{-p[H(t)-H(T)]}$ , where $H(t)$ is a primitive of $h(t)$.

Combining (3.9),(3.11) with the solution of (3.22) , we have a candidate for the optimal value :
\begin{eqnarray}
G(t,x,r,I)=\frac{1}{p}exp\bigg\{-p\big[H(t)-H(T)\big]\bigg\}\exp\bigg\{ k(t)r+z(t)I\bigg\}x^p,
\end{eqnarray}
where $k(t)$ and $I(t)$ is the solution of E.q.(3.20) and E.q.(3.21) respectively, and the corresponding optimal policies :
\begin{eqnarray}
  \left\{\begin{array}{l}
u^*(t)=-\frac{c(t)}{\sigma^2_3(t)}\frac{1}{p-1}x,\\
\\
\pi^*_1(t)=-\frac{\eta(t)}{\sigma_1(t)}\frac{1}{p-1}-\frac{1}{\sigma_1(t)}\frac{p}{p-1}[1-exp\{ b(t-T)\}]\\
\quad\quad\quad-\frac{\rho\bar\sigma_0(t)}{\sigma_1(t)}\frac{p}{p-1}e^{\int_0^t\beta(s)ds}\int_t^Te^{-\int_0^s\beta(v)dv}ds+\frac{\rho\sigma_0(t)}{\sigma_1(t)},\\
\\
\pi_2^*(t)=-\frac{\lambda(t)}{\sigma_2(t)}\frac{1}{p-1}.
\end{array}\right.
\end{eqnarray}
The same as Ho-lee model, constraints $u^*(t)>0$, $V_{xx}<0$ are satisfied, and $\lambda(t), \sigma_1(t), \sigma_2(t), \eta(t) $ are all deterministic and continuous on the time interval $[0,T]$, therefor $\pi^*_1(t)$ and $\pi^*_2(t)$ are deterministic and continuous on $[0,T]$, thus obviously they are bounded. Note in Vasicek model we have $\sigma_1(t)= \frac {b} {\hat b} [\exp\{-\hat b(T_1-t)\}-1]$ .

\section{ Verification theorem}
Due to the presence of the production $rx$ and $Ix$ in the state process of $(2.10)$, the usual verification theorem which require Lipschitz condition and linear growth condition are not applicable to our situation, since the wealth process, the stochastic interest rate process and the inflation process have the possibility of unbound.  There are some methods can be used to deal with this sort of problem. In the \citet{korn2002stochastic}, a suitable verification theorem is given to overcome this difficult. It shows that under linear controlled SDEs, the assumption required in the standard verification can be weaken. In the following, however, we adopt another way of proving the candidates given by (3.17) and (3.23) are optimal for the optimization problem. This method is similar to the one used in \citet{li2009optimal,kraft2009optimal}, and comparing with the method in \citet{korn2002stochastic}, which is more delicate.

First we have the wealth process under optimal strategy $\widetilde\pi^*(t)=(\pi_1^*(t),\pi_2^*(t),u^*(t)),$
\begin{eqnarray}
\nonumber dX^*_t&=&X^*_t\bigg\{\big[r(t)+\pi_1^*(t)\eta(t)\sigma_1(t)+\pi_2^*(t)\sigma_2(t)\lambda(t)-I(t)+\sigma_0^2(t)\big]dt\\
\nonumber &&+\sigma_1(t)\pi_1^*(t)dW_1(t)+\sigma_2(t)\pi_2^*(t)dW_2(t)-\sigma_0(t)dW_0(t)\bigg\}+u^*(t)c(t)dt + u^*(t)\sigma_3(t)dW_3(t)\\
&=&X^*_t\bigg\{\big[r(t)+\pi_1^*(t)\eta(t)\sigma_1(t)+\pi_2^*(t)\sigma_2(t)\lambda(t)-I(t)+\sigma_0^2(t)+\frac{c^2(t)}{\sigma_3^2(t)}\frac{1}{1-p}\big]dt\\
\nonumber &&+\frac{c(t)}{\sigma_3(t)}\frac{1}{1-p}dW_3(t)+\sigma_1(t)\pi^*_1(t)dW_1(t) +\sigma_2(t)\pi^*_2(t)dW_2(t)-\sigma_0(t)dW_0(t)\bigg\},
\end{eqnarray}
where we have used the result $u^*(t)=-\frac{c(t)}{\sigma^2_3(t)}\frac{1}{p-1}X_t^*$ in $(3.18)$ and (3.24).

We can solve the SDE with initial value $X(0)=X_0$ to get the following form:
\begin{eqnarray}
\nonumber X^*_t&=&X_0\exp\bigg\{\int_0^t\big[r(s)+\pi_1^*(s)\eta(s)\sigma_1(s)+\pi_2^*(s)\sigma_2(s)\lambda(s)-I(s)+\sigma_0^2(s)\\
\nonumber &&+\frac{c^2(s)}{(1-p)\sigma_3^2(s)}-\frac{c^2(s)}{2(1-p)^2\sigma_3^2(s)}-\frac{\sigma_1^2(s)(\pi_1^*(s))^2}{2}-\frac{\sigma_2^2(s)(\pi_2^*(s))^2}{2}-\frac{\sigma_0^2(s)}{2}\\
\nonumber &&+\sigma_1(s)\sigma_0(s)\pi^*_1(s)\rho\big]ds+\frac{1}{1-p}\int_0^t\frac{c(s)}{\sigma_3(s)}dW_3(s)+\int_0^t\sigma_1(s)\pi_1^*(s)dW_1(s)\\
\nonumber &&+\int_0^t\sigma_2(s)\pi_2^*(s)dW_2(s)-\int_0^t\sigma_0(s)dW_0(s)\bigg\}\\
\nonumber &=&D_1(t)\cdot \exp\bigg\{\int_0^tr(s)ds-\int_0^tI(s)ds+\frac{1}{1-p}\int_0^t\frac{c(s)}{\sigma_3(s)}dW_3(s)+\int_0^t\sigma_1(s)\pi_1^*(s)dW_1(s)\\
&&+\int_0^t\sigma_2(s)\pi_2^*(s)dW_2(s)-\int_0^t\sigma_0(s)dW_0(s)\bigg\},
\end{eqnarray}
where we let
\begin{eqnarray}
\nonumber D_1(t)&=&X_0\exp\bigg\{\int_0^t\big[\pi_1^*(s)\eta(s)\sigma_1(s)+\pi_2^*(s)\sigma_2(s)\lambda(s)+\sigma_0^2(s)\\
\nonumber&&+\frac{c^2(s)}{(1-p)\sigma_3^2(s)}
-\frac{c^2(s)}{2(1-p)^2\sigma_3^2(s)}-\frac{\sigma_1^2(s)(\pi_1^*(s))^2}{2}-\frac{\sigma_2^2(s)(\pi_1^*(s))^2}{2}-\frac{\sigma_0^2(s)}{2}\\
&&+\sigma_1(s)\sigma_0(s)\pi^*_1(s)\rho\big]ds\bigg\}.
\end{eqnarray}
Note that $D_1(t)$ is  deterministic and continuous on the interval $[0,T]$, since all parameters including $\pi_1^*(t),\pi_2^*(t)$ in the integral are deterministic and continuous.

We have the candidate value function under optimal policy :
\begin{eqnarray}
\nonumber G(t,X^*_t,r_t,I_t)&=&g(t,r,I)\frac{1}{p}(X^*_t)^p\\
\nonumber&=&\frac{1}{p}\exp\bigg\{-p\big[H(t)-H(T)\big]\bigg\}\exp\bigg\{k(t)r(t)+z(t)I(t)\bigg\}(X_t^*)^p\\
\nonumber&=&\frac{1}{p}\exp\bigg\{-p\big[H(t)-H(T)\big]\bigg\}D_1^p(t)\exp\bigg\{k(t)r(t)+z(t)I(t)\bigg\}\\
\nonumber&&\cdot \exp\bigg\{p\int_0^tr(s)ds-p\int_0^tI(s)ds+\frac{p}{1-p}\int_0^t\frac{c(s)}{\sigma_3(s)}dW_3(s)+p\int_0^t\sigma_1(s)\pi_1^*(s)dW_1(s)\\
\nonumber&&+p\int_0^t\sigma_2(s)\pi^*_2(s)dW_2(s)-p\int_0^t\sigma_0(s)dW_0(s) \bigg\}\\
\nonumber&=&D_2(t)\exp\bigg\{k(t)r(t)+p\int_0^tr(s)ds+p\int_0^t\sigma_1(s)\pi_1^*(s)dW_1(s)\bigg\}\\
\nonumber&&\cdot \exp\bigg\{z(t)I(t)-p\int_0^tI(s)ds-p\int_0^t\sigma_0(s)dW_0(s)\bigg\}
 \exp\bigg\{\frac{p}{1-p}\int_0^t\frac{c(s)}{\sigma_3(s)}dW_3(s)\bigg\}\\
&&\cdot \exp\bigg\{p\int_0^t\sigma_2(s)\pi_2^*(s)dW_2(s)\bigg\},
\end{eqnarray}
where $D_2(t)=\frac{1}{p}\exp\bigg\{-p\big[H(t)-H(T)\big]\bigg\}D_1^p(t)$ and obviously it is deterministic and continuous.
\\
\\
Remark:

Strictly speaking, Ho-lee model and Vasicek model have different expression in $k(t)$, $f(t)$, optimal strategy $\pi_2^*(t)$ and volatility $\sigma_1(t)$. As we have seen, however, they have the same form of candidate value function $G$ and the same form of wealth process $X(t)$ such that they will have the same derivation, thus in the derivation above, we didn't distinguish the two models in terms of the usage of notation.
\\

 Now we give two lemma which are very useful in the verification theorem following. The first one aim at Ho-lee model, the other is about vasicek model.
\begin{lem}\label{sec:lem1}
In the setting of Ho-lee model, we assume that $G$ and $\widetilde\pi^*(t)=(\pi_1^*(t),\pi_2^*(t),u^*(t))$ are given by (3.17) and (3.18) respectively. Then the sequence $\{ G(\tau_n,X_{\tau_n}^{\widetilde\pi^*} , I_{\tau_n} , r_{\tau_n})\}_{n\in N}$ is uniformly integrable for all sequence of stopping times $\{\tau_n\}_{n\in N}$.
\end{lem}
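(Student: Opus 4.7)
The plan is to exhibit $\varepsilon>0$ and a constant $C<\infty$ such that
\[
\sup_{n}E\bigl[G(\tau_{n},X_{\tau_{n}}^{*},r_{\tau_{n}},I_{\tau_{n}})^{1+\varepsilon}\bigr]\le C,
\]
from which uniform integrability follows by the de la Vallée--Poussin criterion with $\phi(x)=x^{1+\varepsilon}$. Since the stopping times take values in $[0,T]$, and $D_{2}(t)$ in (4.4) is deterministic and continuous, hence bounded on $[0,T]$, the task reduces to bounding the four exponential factors in (4.4) uniformly in the stopping time.

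The decisive step is an algebraic reduction using It\^o's formula together with the ODEs (3.14), (3.15). Applying It\^o to $k(t)r(t)$ and $z(t)I(t)$, the drift terms generated by $k'=-p$ and $z'-\beta z-p=0$ exactly cancel the $\pm p\int_{0}^{t}r(s)\,ds$ and $\mp p\int_{0}^{t}I(s)\,ds$ appearing in the exponent of $G$, yielding
\begin{align*}
k(t)r(t)+p\int_{0}^{t}r(s)\,ds &= k(0)r(0)+\int_{0}^{t}k(s)a(s)\,ds+b\int_{0}^{t}k(s)\,dW_{1}(s),\\
z(t)I(t)-p\int_{0}^{t}I(s)\,ds &= z(0)I(0)+\int_{0}^{t}z(s)\beta(s)\alpha(s)\,ds+\int_{0}^{t}z(s)\bar{\sigma}_{0}(s)\,dW_{0}(s).
\end{align*}
Substituting back into (4.4) collapses $G(t,X_{t}^{*},r_{t},I_{t})$ to $\widetilde{D}(t)\exp\!\bigl(\sum_{i=0}^{3}\int_{0}^{t}\gamma_{i}(s)\,dW_{i}(s)\bigr)$, where $\widetilde{D}$ is deterministic and bounded on $[0,T]$ and each $\gamma_{i}$ is deterministic and bounded on $[0,T]$ (for example $\gamma_{1}=bk+p\sigma_{1}\pi_{1}^{*}$ and $\gamma_{0}=z\bar{\sigma}_{0}-p\sigma_{0}$). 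This is the main obstacle: it is essential that $k$ and $z$ solve exactly those ODEs so that the a priori unbounded terms $\int_{0}^{t}r\,ds$ and $\int_{0}^{t}I\,ds$ are absorbed into bounded deterministic drifts, and verifying the cancellation requires careful bookkeeping.

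Once this representation is in place, I apply H\"older's inequality with conjugate exponents $q_{0},\ldots,q_{3}$ satisfying $\sum_{i}1/q_{i}=1$ to split the moment bound into four one-dimensional problems; this is legitimate even though $W_{0}$ and $W_{1}$ are $\rho$-correlated, since H\"older does not require independence. For each factor, inserting and removing the It\^o correction gives
\[
E\!\left[\exp\!\Big(a\int_{0}^{\tau_{n}}\gamma_{i}(s)\,dW_{i}(s)\Big)\right]\le\exp\!\Big(\tfrac{a^{2}}{2}\|\gamma_{i}\|_{\infty}^{2}T\Big)\,E[N_{\tau_{n}}^{(i)}]\le\exp\!\Big(\tfrac{a^{2}}{2}\|\gamma_{i}\|_{\infty}^{2}T\Big),
\]
where $N^{(i)}$ is the Dol\'eans--Dade exponential of $a\int_{0}^{\cdot}\gamma_{i}\,dW_{i}$, which is a true martingale because $\gamma_{i}$ is bounded (Novikov is trivial), so $E[N_{\tau_{n}}^{(i)}]\le 1$ by optional stopping. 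Taking $a=(1+\varepsilon)q_{i}$ and combining the four bounds with the uniform bound on $\widetilde{D}(t)$ produces the desired uniform $L^{1+\varepsilon}$ estimate and completes the proof.
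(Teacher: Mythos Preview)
Your proof is correct and follows essentially the same line as the paper: both apply It\^o to $k(t)r(t)$ and $z(t)I(t)$ and use the ODEs (3.14)--(3.15) to cancel the a~priori unbounded terms $\int_0^t r\,ds$ and $\int_0^t I\,ds$, reducing $G(t,X_t^*,r_t,I_t)$ to a bounded deterministic factor times an exponential of stochastic integrals with bounded deterministic integrands, and then bound the $L^{q}$ moments via exponential martingales and optional stopping. The only cosmetic difference is that the paper assembles all four stochastic integrals into a \emph{single} Dol\'eans exponential (carrying along the $\rho$-cross term between $W_0$ and $W_1$ in the quadratic variation) and writes $|G|^q=D_3(t)M(t)$ directly, whereas you split the product by H\"older and handle each Brownian factor separately; both routes are valid and give the same conclusion.
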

\begin{proof}
 We let $X^*_t:=X^{\widetilde\pi^*}_t $, according to the form given by (4.4). For every fixed $q>1$ we have
 \begin{eqnarray}
 \nonumber|G(t,X^*_t,r_t,I_t)|^q&=&|D_2(t)|^q\exp\bigg\{qk(t)r(t)+qp\int_0^tr(s)ds+qp\int_0^t\sigma_1(s)\pi_1^*(s)dW_1(s)\bigg\}\\
\nonumber&&\exp\bigg\{qz(t)I(t)-qp\int_0^tI(s)ds-qp\int_0^t\sigma_0(s)dW_0(s)\bigg\}\\
&&\exp\bigg\{\frac{qp}{1-p}\int_0^t\frac{c(s)}{\sigma_3(s)}dW_3(s)\bigg\}\exp\bigg\{qp\int_0^t\sigma_2(s)\pi_2^*(s)dW_2(s)\bigg\}.
 \end{eqnarray}

 At first, combining with the result $k(t)=p(T-t)$ we have
 \begin{eqnarray}
\nonumber&&\exp\bigg\{qk(t)r(t)+qp\int_0^tr(s)ds+qp\int_0^t\sigma_1(s)\pi_1^*(s)dW_1(s)\bigg\}\\
\nonumber&&=\exp\bigg\{qp(T-t)r(t)+qp\int_0^tr(s)ds+qp\int_0^t\sigma_1(s)\pi_1^*(s)dW_1(s)\bigg\}\\
&&=\exp\bigg\{qpTr(t)\bigg\}\exp\bigg\{-qptr(t)\bigg\}\exp\bigg\{qp\int_0^tr(s)ds+qp\int_0^t\sigma_1(s)\pi_1^*(s)dW_1(s)\bigg\}.
 \end{eqnarray}
 With the equality
 \[\exp\bigg\{-qptr(t)\bigg\}=\exp\bigg\{-qp\int_0^tsdr(s)-qp\int_0^tr(s)ds\bigg\},\]
 we have
 \[(4.6)=\exp\bigg\{qp\int_0^t\sigma_1(s)\pi^*_1(s)dW_1(s)\bigg\}\exp\bigg\{qpTr(t)-qp\int_0^tsdr(s)\bigg\}.\]
 From the SDE of the stochastic interest rate
 \[dr(t)=a(t)dt+bdW_1(t) \quad\quad with \quad r(0)=r_0,\]
 we finally obtain
 \begin{eqnarray}
 \nonumber&&\exp\bigg\{qk(t)r(t)+qp\int_0^tr(s)ds+qp\int_0^t\sigma_1(s)\pi_1^*(s)dW_1(s)\bigg\}\\
\nonumber&=&\exp\bigg\{qp\int_0^t\sigma_1(s)\pi^*_1(s)dW_1(s)\bigg\}\exp\bigg\{qpTr_0+qpT\int_0^ta(s)ds+qpT\int_0^tbdW_1(s)\\
\nonumber &&-qp\int_0^ts\big[a(s)ds+bdW_1(s)\big]\bigg\}\\
\nonumber &=&\exp\bigg\{qp\int_0^t\sigma_1(s)\pi^*_1(s)dW_1(s)\bigg\}\exp\bigg\{qpTr_0+qp\big[\int_0^ta(s)(T-s)ds+\int_0^tb(T-s)dW^1_s\big]\bigg\}\\
\nonumber &=&\exp\bigg\{qpTr_0\bigg\}\exp\bigg\{qp\int_0^ta(s)(T-s)ds\bigg\}\exp\bigg\{qp\int_0^t\big[\sigma_1(t)\pi_1^*(s)+b(T-s)\big]dW_1(s)\bigg\}.\\
 \end{eqnarray}

Next we consider the term :
 \begin{eqnarray}
\exp\bigg\{qz(t)I(t)-qp\int_0^tI(s)ds-qp\int_0^t\sigma_0(s)dW_0(s)\bigg\}.
 \end{eqnarray}
By Ito formula
 \begin{eqnarray}
\nonumber z(t)I(t)&=&z(0)I_0+\int_0^tI(s)z'(s)ds+\int_0^tz(s)dI(s)\\
\nonumber &=&z(0)I_0+\int_0^tI(s)z'(s)ds+\int_0^tz(s)\bigg\{\beta(s)\big[\alpha(s)-I(s)\big]ds+\bar\sigma_0(s)dW_0(s)\bigg\}\\
\nonumber &=&z(0)I_0+\int_0^tI(s)z'(s)ds-\int_0^tz(s)\beta(s)I(s)ds\\
&&+\int_0^tz(s)\beta(s)\alpha(s)ds+\int_0^t\bar\sigma_0(s)z(s)dW_0(s),
 \end{eqnarray}
 plug (4.9) into (4.8), then
  \begin{eqnarray}
\nonumber (4.8)&=& \exp\bigg\{qz(0)I_0+q\int_0^t[z'(s)-\beta(s)z(s)-p]I(s)ds+q\int_0^t\beta(s)\alpha(s)z(s)ds\\
  &&+q\int_0^t\bar\sigma_0(s)z(s)dW_0(s)-qp\int_0^t\sigma_0(s)dW_0(s)\bigg \}.
 \end{eqnarray}
Combining with equation (3.15), we get
\begin{eqnarray}
\nonumber(4.8)&=&\exp\bigg\{qz(0)I_0+q\int_0^t\beta(s)\alpha(s)z(s)ds+q\int_0^t\bar\sigma_0(s)z(s)dW_0(s)-qp\int_0^t\sigma_0(s)dW_0(s) \bigg\}\\
   &=&\exp\bigg\{qz(0)I_0+q\int_0^t\beta(s)\alpha(s)z(s)ds\bigg\} \exp\bigg\{ \int_0^t q\big[ \bar\sigma_0(s)z(s)-p\sigma_0(s)\big]dW_0(s)\bigg\}.
\end{eqnarray}
Put these results, (4.7) and (4.11), into (4.5). We get:
\begin{eqnarray}
\nonumber(4.5)&=&|D_2(t)|^q\exp\bigg\{qpTr_0+qz(0)I_0+qp\int_0^ta(s)(T-s)ds+q\int_0^t\beta(s)\alpha(s)z(s)ds\bigg\}\\
\nonumber&&\exp\bigg\{\frac{qp}{1-p}\int_0^t\frac{c(s)}{\sigma_3(s)}dW_3(s)+qp\int_0^t\big[\sigma_1(s)\pi_1^*(s)+b(T-s)\big]dW_1(s)\\
\nonumber&&+qp\int_0^t\sigma_2(s)\pi^*_2(s)dW_2(s)+q\int_0^t\big[\bar\sigma_0(s)z(s)-p\sigma_0(s)\big]dW_0(s) \bigg   \}.
\end{eqnarray}
For the sum of the stochastic integral above, we construct martingale
\begin{eqnarray}
\nonumber(4.5)&=&|D_2(t)|^q\exp\bigg\{qpTr_0+qz(0)I_0+qp\int_0^ta(s)(T-s)ds+q\int_0^t\beta(s)\alpha(s)z(s)ds\bigg\}\\
\nonumber&&\exp\bigg\{\frac{1}{2}\frac{q^2p^2}{(1-p)^2}\int_0^t\frac{c^2(s)}{\sigma_3^2(s)}ds+\frac{1}{2}q^2p^2\int_0^t\big[\sigma_1(s)\pi_1^*(s)+b(T-s)\big]^2ds\\
\nonumber&&+\frac{1}{2}q^2p^2\int_0^t\sigma^2_2(s)(\pi^*_2(s))^2ds+\frac{1}{2}q^2\int_0^t\big[\bar\sigma_0(s)z(s)-p\sigma_0(s)\big]^2ds\\
\nonumber&&+q^2p\rho\int_0^t\big[\sigma_1(s)\pi^*_1(s)+b(T-s)\big]\big[\bar\sigma_0(s)z(s)-p\sigma_0(s)\big]ds  \bigg \}\cdot M(t)\\
&=&D_3(t)\cdot M(t),
\end{eqnarray}
where $D_3(t)$ is deterministic and continuous on time interval [0,T], therefore $D_3(t)$ is bounded. On the other hand, $M(t)$ is a martingale, and specifically,
\begin{eqnarray}
\nonumber M(t)&=&\exp\bigg\{-\frac{1}{2}\frac{q^2p^2}{(1-p)^2}\int_0^t\frac{c^2(s)}{\sigma_3^2(s)}ds-\frac{1}{2}q^2p^2\int_0^t\big[\sigma_1(s)\pi_1^*(s)+b(T-s)\big]^2ds\\
\nonumber&&-\frac{1}{2}q^2p^2\int_0^t\sigma^2_2(s)(\pi^*_2(s))^2ds-\frac{1}{2}q^2\int_0^t\big[\bar\sigma_0(s)z(s)-p\sigma_0(s)\big]^2ds\\
\nonumber&&-q^2p\rho\int_0^t\big[\sigma_1(s)\pi^*_1(s)+b(T-s)\big]\big[\bar\sigma_0(s)z(s)-p\sigma_0(s)\big]ds \bigg\}\\
\nonumber&&\exp\bigg\{\frac{qp}{1-p}\int_0^t\frac{c(s)}{\sigma_3(s)}dW_3(s)+qp\int_0^t[\sigma_1(s)\pi_1^*(s)+b(T-s)]dW_1(s)\\
&&+qp\int_0^t\sigma_2(s)\pi^*_2(s)dW_2(s)+q\int_0^t\big[\bar\sigma_0(s)z(s)-p\sigma_0(s)\big]dW_0(s)  \bigg  \}.
\end{eqnarray}
Finally, due to the optional stopping theorem (\citet{karatzas1991brownian}), we obtain that for all stopping times $\tau_n$ with $0\leq\tau_n\leq T$,
\begin{eqnarray*}
E(|G(\tau_n,X_{\tau_n}^{\widetilde\pi^*},r_{\tau_n},I_{\tau_n})|^q)=E\big[D_3(\tau_n)\cdot M(\tau_n)\big]\leq\sup_{t\in[0,T]}D_3(t)\cdot E\big[M(\tau_n)\big]\leq\sup_{t\in[0,T]}D_3(t)<\infty.
\end{eqnarray*}

\end{proof}

\begin{lem}\label{sec:lem1}
In the setting of Vasicek model, we assume that $G$ and $\widetilde\pi^*(t)=(\pi_1^*(t),\pi_2^*(t),u^*(t))$ are given by $(3.23)$ and $(3.24)$ respectively. Then the sequence $\{ G(\tau_n,X_{\tau_n}^{\widetilde\pi^*} , I_{\tau_n} , r_{\tau_n})\}_{n\in N}$ is uniformly integrable for all sequence of stopping times $\{\tau_n\}_{n\in N}$.
\end{lem}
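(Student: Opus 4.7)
The plan is to mirror the proof of the Ho-Lee case as closely as possible, using the same decomposition of $|G(t,X^*_t,r_t,I_t)|^q$ into a deterministic bounded factor times a stochastic exponential to which the optional stopping theorem applies. The only technical point that must be handled differently is the rearrangement of the $r$-dependent terms, because in the Vasicek setting $k(t)=\tfrac{p}{\hat b}[1-\exp\{\hat b(t-T)\}]$ is no longer affine in $t$ and, more importantly, the drift $a(t)=\theta(t)-\hat b r(t)+b\xi(t)$ now depends on $r(t)$ itself.

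First I would fix $q>1$ and expand $|G(t,X^*_t,r_t,I_t)|^q$ using the representation (4.4) exactly as in (4.5); the remark after (4.4) tells us the same formula is valid. The Brownian integrals involving $W_2$, $W_3$ and the $I$-term are treated exactly as in the previous lemma, so I would simply reuse (4.8)--(4.11): the ODE $z'(t)-\beta(t)z(t)-p=0$ in (3.21) is the same as in Ho-Lee, hence applying It\^o's formula to $z(t)I(t)$ annihilates the drift term $I(s)\,ds$ just as before and produces the $W_0$-integral $\int_0^t q[\bar\sigma_0(s)z(s)-p\sigma_0(s)]\,dW_0(s)$ against a deterministic, continuous (hence bounded) prefactor.

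The main obstacle, and the only genuinely new computation, is handling
\[
\exp\!\Bigl\{qk(t)r(t)+qp\!\int_0^t\! r(s)\,ds+qp\!\int_0^t\!\sigma_1(s)\pi_1^*(s)\,dW_1(s)\Bigr\}.
\]
Instead of the integration-by-parts trick used for Ho-Lee, I would apply It\^o's formula to $k(t)r(t)$ using the Vasicek dynamics (2.4) and the ODE (3.20):
\[
d(k(t)r(t))=[k'(t)-\hat b k(t)]r(t)\,dt+k(t)[\theta(t)+b\xi(t)]\,dt+bk(t)\,dW_1(t).
\]
Since $k'(t)-\hat b k(t)=-p$ by (3.20), integrating from $0$ to $t$ yields
\[
k(t)r(t)+p\!\int_0^t\! r(s)\,ds = k(0)r_0+\int_0^t k(s)[\theta(s)+b\xi(s)]\,ds+b\!\int_0^t k(s)\,dW_1(s),
\]
so the random drift in $r$ cancels exactly and the $W_1$-integrand becomes the deterministic function $qp[\sigma_1(s)\pi_1^*(s)+bk(s)]$. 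The prefactor $\exp\{qk(0)r_0+q\int_0^t k(s)[\theta(s)+b\xi(s)]\,ds\}$ is deterministic and continuous on $[0,T]$, hence bounded.

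Collecting everything, I obtain the Vasicek analogue of (4.12),
\[
|G(t,X^*_t,r_t,I_t)|^q=\widetilde D_3(t)\cdot \widetilde M(t),
\]
where $\widetilde D_3$ is a bounded deterministic function on $[0,T]$ and $\widetilde M$ is the Doléans--Dade exponential of the martingale
\[
\frac{qp}{1-p}\!\int_0^\cdot\!\tfrac{c(s)}{\sigma_3(s)}dW_3+qp\!\int_0^\cdot\!\bigl[\sigma_1(s)\pi_1^*(s)+bk(s)\bigr]dW_1+qp\!\int_0^\cdot\!\sigma_2(s)\pi_2^*(s)dW_2+q\!\int_0^\cdot\!\bigl[\bar\sigma_0(s)z(s)-p\sigma_0(s)\bigr]dW_0,
\]
all of whose integrands are deterministic and continuous, so Novikov's condition holds trivially and $\widetilde M$ is a genuine martingale. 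The optional stopping theorem then gives $E[|G(\tau_n,X^{\widetilde\pi^*}_{\tau_n},r_{\tau_n},I_{\tau_n})|^q]\le \sup_{t\in[0,T]}\widetilde D_3(t)<\infty$, uniformly in $n$, and uniform integrability follows in the standard way by choosing any $q>1$.
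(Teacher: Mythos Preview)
Your argument is correct and in fact tidier than the paper's. The paper does not apply It\^o's formula directly to $k(t)r(t)$; instead it writes $k(t)=\frac{p}{\hat b}-\frac{p}{\hat b}e^{\hat b(t-T)}$, splits the exponent accordingly, and then applies It\^o separately to $e^{\hat b(t-T)}r(t)$ and to $\frac{1}{\hat b}r(t)+\int_0^t r(s)\,ds$ so that the stochastic $\hat b\,r(t)$ contributions cancel by hand. Your route---applying It\^o to $k(t)r(t)$ and invoking the ODE $(3.20)$ so that $k'(t)-\hat b k(t)=-p$ kills the random drift in one stroke---is precisely the trick already used for the $I$-term via $(3.21)$, so it unifies the two state variables and is shorter. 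Both approaches land on the same structure: a bounded deterministic factor times an exponential martingale with deterministic continuous integrands, and the conclusion via Novikov and optional stopping is identical. One small slip: after combining the pieces your $W_1$-integrand should read $q\bigl[p\,\sigma_1(s)\pi_1^*(s)+b\,k(s)\bigr]$, not $qp\bigl[\sigma_1(s)\pi_1^*(s)+b\,k(s)\bigr]$; this is harmless for the argument since the integrand is deterministic and bounded either way.
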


\begin{proof}
The proof here is similar to lemma 4.1, the difference between the two case is the form of the drift of stochastic interest rate.
In Ho-lee model $a(t)$ is deterministic and continuous on interval [0,T], but in Vasicek model the drift is stochastic, explicitly we have $a(t)=\theta(t)-\hat b r(t)+b\xi(t)$. Now, for simplicity, we only give the key step which is different from lemma 4.1.

Let $X^*_t:=X^{\widetilde\pi^*}_t $, according to the form given by (4.4), for every fixed $q>1$ we have
 \begin{eqnarray}
 \nonumber|G(t,X^*_t,r_t,I_t)|^q&=&|D_2(t)|^q\exp\bigg\{qk(t)r(t)+qp\int_0^tr(s)ds+qp\int_0^t\sigma_1(s)\pi_1^*(s)dW_1(s)\bigg\}\\
\nonumber&&\exp\bigg\{qz(t)I(t)-qp\int_0^tI(s)ds-qp\int_0^t\sigma_0(s)dW_0(s)\bigg\}\\
&&\exp\bigg\{\frac{qp}{1-p}\int_0^t\frac{c(s)}{\sigma_3(s)}dW_3(s)\bigg\}\exp\bigg\{qp\int_0^t\sigma_2(s)\pi_2^*(s)dW_2(s)\bigg\}.
 \end{eqnarray}
 Combining with the result in $(3.22)$, i.e. $k(t)=\frac{p}{\hat b}[1-e^{\hat b(t-T)}]$
 \begin{eqnarray}
 \nonumber &&\exp\bigg\{qk(t)r(t)+qp\int_0^t r(s)ds+qp\int_0^t \sigma_1(s)\pi_1^*(s)dW_1(s)\bigg\}\\
 \nonumber &=&\exp\bigg\{\frac{qp}{\hat b}\big[1-e^{\hat b(t-T)}\big]r(t)+qp\int_0^t r(s)ds + qp\int_0^t \sigma_1(s)\pi_1^*(s)dW_1(s)  \bigg  \}\\
 \nonumber &=&\exp\bigg\{\frac{qp}{\hat b}r(t)\bigg\}\exp\bigg\{-\frac{qp}{\hat b}e^{\hat b (t-T)}r(t)\bigg\}\\
  &&\cdot \exp\bigg\{qp\int_0^t r(s)ds+ qp\int_0^t \sigma_1(s)\pi_1^*(s)dW_1(s)\bigg\}.
 \end{eqnarray}
 Note in vasicek model $r(t)$ is described by SDE:
 \[dr(t)=\big[\theta(t)-\hat b r(t)+b\xi(t)\big]dt+bdW_1(t).\]
Then by Ito formula we have
 \begin{eqnarray}
 \nonumber e^{ \hat b(t-T)}r(t)&=& e^{-\hat b T} r_0 + \int_0^t \hat b e^{\hat b(s-T)}r(s) ds + \int_0^t e^{\hat b (s-T)}d r(s)\\
 \nonumber &=&e^{-\hat bT}r_0 + \int_0^t \hat b e^{\hat b(s-T)}r(s) ds + \int_0^t e^{\hat b(s-T)}[\theta(s)-\hat br(s) + b\xi(s)]ds \\
 \nonumber &&+ \int_0^t b e^{\hat b(s-T)}dW_1(s)\\
 &=& e^{-\hat bT}r_0 + \int_0^t e^{\hat b(s-T)}[\theta(s)+ b\xi(s)]ds + \int_0^t b e^{\hat b(s-T)}dW_1(s),
\end{eqnarray}
plug $(4.16)$ into $(4.15)$
\begin{eqnarray*}
E.q.(4.15) &=& \exp\bigg\{\frac{qp}{\hat b}r(t)\bigg\}\exp\bigg\{-\frac{qp}{\hat b}e^{-\hat b T}r_0 - \frac{qp}{\hat b} \int_0^t e^{\hat b(s-T)}\big[\theta(s)+b\xi(s)\big]ds\\
 &&- \frac{qp}{\hat b} \int_0^t e^{\hat b (s-T)}bdW_1(s)  \bigg\}\exp\bigg\{qp\int_0^t r(s)ds \bigg\}\exp\bigg\{qp \int_0^t \sigma_1(s)\pi_1^*(s)dW_1(s)\bigg\}.
\end{eqnarray*}
In addition, we have
\begin{eqnarray*}
&&\exp\bigg\{qp\big[\frac{1}{\hat b}r(t)+\int_0^t r(s)ds\big]\bigg\}\\
&&=\exp\bigg\{qp\big[\frac{r_0}{\hat b}+\frac{1}{\hat b} \int_0^t (\theta(s)+b\xi(s))ds + \frac{1}{\hat b}\int_0^t bdW_1(s)\big]\bigg\}.
\end{eqnarray*}
Then we get
\begin{eqnarray*}
E.q.(4.15)&=& \exp\bigg\{-\frac{qp}{\hat b}e^{-\hat b T}r_0 - \frac{qp}{\hat b}\int_0^t e^{\hat b (s-T)}\big[\theta(s)+ b\xi(s)\big]ds +\frac{qp}{\hat b}r_0 + \frac{qp}{\hat b} \int _0^t \big[\theta(s)+b\xi(s)\big]ds \bigg\}\\
&&\cdot \exp\bigg\{\int_0^t \big[\frac{b}{\hat b}+qp\sigma_1(s)\pi_1^*(s)-\frac{qp}{\hat b}e^{\hat b (s-T)}b\big]dW_1(s)\bigg\}.
\end{eqnarray*}

We can see the first exponential term in the left hand of the equation is deterministic and continuous, on the other hand the second exponential term contain an Ito integral of which integrand is deterministic and continuous. The remainder of the proof is similar to lemma 4.1, we can treat them in the same way, so omit it.

\end{proof}

We call the strategy which satisfy the lemma 4.1 or lemma 4.2 above having the property U. We will see this property is very useful, because it allows the interchange of expected value and limit. Further we should note that the theorem 4.3 below is suitable to both the Ho-lee model and the vasicek model.

Before we begin to prove verification theorem, we give some definition which is similar to \citet{kraft2009optimal}. Define :
\[\Im:=[0,\infty )\times\Gamma_r\times\Gamma_I,\]
where $\Gamma_r$ is the range of the stochastic interest rate $r$, $\Gamma_I$ is the range of the inflation index $I$. Again define
\[\Im_k:=[0,\infty )\times\mathbb{R}\times\mathbb{R}\cap\{z\in\mathbb{R}^3:|z|<k \quad dist(z,\partial\Im)>k^{-1} \}, \quad\quad k\in \mathbb{N}, \]
and
\[Q_k:=[0,T-\frac{1}{k})\times \Im_k ,  \]
where the sets $Q_k$ are not empty for $k\in \mathbb{N} $ with $ k>\frac{1}{T}=:\widetilde{k} $. Without loss of generality, we therefore assume $k>\widetilde{k}$. In addition, let $\theta_k$ be the first exit time of $(t,X(t),r(t),I(t))$ from $Q_k$. Note that for $k\longrightarrow\infty$ we have  $\theta_k\longrightarrow T \quad a.s.$
\\

\begin{thm}
For all $\widetilde\pi=(\pi_1,\pi_2,u)\in\Theta$ we have:
\begin{eqnarray}
E_{t,x,r,I}\big[\frac{1}{p}(X_T)^p\big]\leq G(t,x,r,I).
\end{eqnarray}
Further assume $\widetilde\pi^*=(\pi^*_1,\pi^*_2,u^*)\in\Theta$, if $\widetilde\pi^*=(\pi^*_1,\pi^*_2,u^*)$ has property U, we get
\begin{eqnarray}
E_{t,x,r,I}\big[\frac{1}{p}(X^{\widetilde\pi^*}_T)^p\big]= G(t,x,r,I),
\end{eqnarray}
where $G$ is defined by (3.17) or $(3.23)$
\end{thm}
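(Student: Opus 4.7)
The plan is a standard It\^o--HJB verification argument, but localized by the stopping times $\theta_k$ to sidestep the missing Lipschitz/linear-growth structure, and then closed out by Fatou for the inequality and by property U for the equality.

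First, fix an arbitrary admissible $\widetilde\pi\in\Theta$ and write $X_s=X_s^{\widetilde\pi}$. Apply It\^o's formula to $G(s,X_s,r_s,I_s)$ on the interval $[t,\theta_k\wedge T]$, using the joint dynamics of $(X,r,I)$ in (3.1). This yields
\begin{eqnarray*}
G(\theta_k\wedge T, X_{\theta_k\wedge T}, r_{\theta_k\wedge T}, I_{\theta_k\wedge T})
= G(t,x,r,I) + \int_t^{\theta_k\wedge T} \mathcal{A}G\,ds + \mathcal{M}_{\theta_k\wedge T},
\end{eqnarray*}
where $\mathcal{M}$ collects the four stochastic integrals against $W_0,W_1,W_2,W_3$. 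Since the coefficients of $G$ and its derivatives (which on $Q_k$ involve only bounded $x,r,I$) are bounded on $[0,T-\tfrac1k]\times \Im_k$, the process $\mathcal{M}_{\cdot\wedge\theta_k}$ is a genuine martingale, so $E[\mathcal{M}_{\theta_k\wedge T}]=0$.

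Next, invoke the HJB equation. Because $G$ solves (3.3), we have $\sup_{\widetilde\pi}\mathcal{A}G=0$, hence $\mathcal{A}G\leq 0$ pointwise for the strategy $\widetilde\pi$ plugged in. Taking expectations gives
\begin{eqnarray*}
E_{t,x,r,I}\!\left[G(\theta_k\wedge T, X_{\theta_k\wedge T}, r_{\theta_k\wedge T}, I_{\theta_k\wedge T})\right] \leq G(t,x,r,I)
\quad\text{for every } k.
\end{eqnarray*}
Now let $k\to\infty$; since $\theta_k\to T$ a.s.\ and $G$ is continuous, the integrand converges a.s.\ to $G(T,X_T,r_T,I_T)=\tfrac1p X_T^p$. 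Because $G=g(t,r,I)\,x^p/p$ with $g>0$ and $p\in(0,1)$, the integrand is nonnegative, so Fatou's lemma yields
\begin{eqnarray*}
E_{t,x,r,I}\!\left[\tfrac{1}{p}X_T^p\right]
\leq \liminf_{k\to\infty} E_{t,x,r,I}\!\left[G(\theta_k\wedge T,\ldots)\right]
\leq G(t,x,r,I),
\end{eqnarray*}
which is (4.17).

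For (4.18), run the same computation with $\widetilde\pi^*$; by construction the feedback maximizers (3.5)--(3.7) achieve the supremum in (3.3), so $\mathcal{A}G\equiv 0$ along the optimal trajectory, and the previous inequality becomes the equality
\begin{eqnarray*}
E_{t,x,r,I}\!\left[G(\theta_k\wedge T, X^{\widetilde\pi^*}_{\theta_k\wedge T}, r_{\theta_k\wedge T}, I_{\theta_k\wedge T})\right] = G(t,x,r,I)
\quad\text{for every }k.
\end{eqnarray*}
Taking $\tau_n=\theta_n\wedge T$ in Lemma 4.1 (Ho--Lee) or Lemma 4.2 (Vasicek) shows the sequence inside the expectation is uniformly integrable, so Vitali's convergence theorem justifies swapping limit and expectation; combined with the a.s.\ limit $\tfrac1p(X^{\widetilde\pi^*}_T)^p$, we obtain $E_{t,x,r,I}[\tfrac1p(X^{\widetilde\pi^*}_T)^p]=G(t,x,r,I)$.

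The genuine technical obstacle is precisely what the lemmas already handled: without Lipschitz/growth control on the state, one cannot directly interchange $\lim$ and $E$ in Step~3, and without that interchange the candidate only certifies an upper bound. The role of property U is to supply $L^q$ boundedness of $G$ along optimal trajectories uniformly over stopping times, which is exactly the amount of integrability needed to upgrade Fatou's inequality to equality for $\widetilde\pi^*$; everything else in the verification is routine It\^o calculus and the HJB inequality.
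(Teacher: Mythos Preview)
Your proposal is correct and follows essentially the same route as the paper: It\^o's formula plus the HJB inequality for (4.17), and localization by $\theta_k$ together with property~U to interchange limit and expectation for (4.18). The only cosmetic difference is in the treatment of (4.17): the paper applies It\^o on $[t,\tau]$ without localizing and simply invokes the ``supermartingale property'' of $G$, whereas you localize by $\theta_k$ from the start and close out with Fatou; since $G\ge 0$, both arguments amount to the same nonnegative-local-supermartingale reasoning, and yours is the more explicit of the two.
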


\begin{proof}
For $ \forall \tau > t $ and all $\widetilde\pi=(\pi_1,\pi_2,u)\in \Theta $, Ito's formula lead to
\begin{eqnarray*}
G(\tau,X_\tau,r_\tau,I_\tau)&=&G(t,x,r,I)+\int_t^\tau\mathcal{A}^{\pi}G(s,X_s^\pi,r_s,I_s)ds+\int_t^\tau u(s)\sigma_3(s)dW_3(s)\\
&&+\int_t^\tau[ X(s)\pi_1(s)\sigma_1(s)+b]dW_1(s)+\int_t^\tau X(s)\pi_2(s)\sigma_2(s)dW_2(s)\\
&&+\int_t^\tau[\bar\sigma_0(s)-X(s)\sigma_0(s)]dW_0(s).
\end{eqnarray*}
For $\widetilde\pi $ is an admissible control, HJB equation (3.3) imply
\[\mathcal{A}^{\widetilde \pi}G(s,X_s^{\widetilde\pi},r_s,I_s)\leq0,\]
consequently
\begin{eqnarray*}
G(\tau,X_\tau,r_\tau,I_\tau)&\leq& G(t,x,r,I)+\int_t^\tau u(s)\sigma_3(s)dW_3(s)+\int_t^\tau[ X(s)\pi_1(s)\sigma_1(s)+b]dW_1(s)\\
&&+\int_t^\tau X(s)\pi_2(s)\sigma_2(s)dW_2(s)+\int_t^\tau[\bar\sigma_0(s)-X(s)\sigma_0(s)]dW_0(s).
\end{eqnarray*}
Because of the property of supermartingale, taking expectation lead to
\[E_{t,x,r,I}[G(\tau,X_\tau,r_\tau,I_\tau)]\leq G(t,x,r,I).\]
Combining with $G(T,x,r,I)=\frac{1}{p}x^p$, we obtain
\[E_{t,x,r,I}[\frac{1}{p}(X_T)^p]\leq G(t,x,r,I).\]
We have prove (4.17).

Next we begin to prove (4.18). Denote $X^*:=X^{\widetilde\pi^*}$, for stopping time $\theta_p$ which is defined before, by Ito's formual
\begin{eqnarray}
\nonumber G(\theta_k,X_{\theta_k},r_{\theta_k},I_{\theta_k})&=&G(t,x,r,I)+\int_t^{\theta_k}\mathcal{A}^{\widetilde\pi}G(s,X_s^*,r_s,I_s)ds\\
\nonumber&&+\int_t^{\theta_k} u(s)\sigma_3(s)dW_3(s)+\int_t^{\theta_k}[ X(s)\pi_1(s)\sigma_1(s)+b]dW_1(s)\\
\nonumber&&+\int_t^{\theta_k} X(s)\pi_2(s)\sigma_2(s)dW_2(s)+\int_t^{\theta_k}[\bar\sigma_0(s)-X(s)\sigma_0(s)]dW_0(s).\\
\end{eqnarray}
From the definition of stopping time $\theta_k$, we know $(X_t,r_t,I_t)$ is bounded on $[0,\theta_k]$. On the other hand, $\sigma_1(t), \sigma_2(t), \sigma_0(t),\bar\sigma_0(t)$, b is deterministic and continuous, hence they are bounded on $[0,\theta_k]$. Therefore, by (3.18) and (3.24), we know $u^*(t), \pi_1^*(t),\pi_2^*(t)$ is bounded on $[0,\theta_k]$ too. Then we have
\begin{eqnarray*}
E_{t,x,r,I}\bigg\{\int_t^{\theta_k} u(s)\sigma_3(s)dW_3(s)+\int_t^{\theta_k}\big[ X(s)\pi_1(s)\sigma_1(s)+b\big]dW_1(s)\\
+\int_t^{\theta_k} X(s)\pi_2(s)\sigma_2(s)dW_2(s)+\int_t^{\theta_k}\big[\bar\sigma_0(s)-X(s)\sigma_0(s)\big]dW_0(s)\bigg\}=0.
\end{eqnarray*}
Additional
\[\mathcal{A}^{\widetilde\pi^*}G(s,X_s^*,r_s,I_s)=0,\]
thus we take expectation for (4.19) obtain
\[E_{t,x,r,I}\bigg\{G(\theta_k,X^*_{\theta_k},r_{\theta_k},I_{\theta_k})\bigg\}=G(t,x,r,I),\]
because sequence $\{G(\theta_k,X_{\theta_k},r_{\theta_k},I_{\theta_k})\}$ is uniformly integrable. It show that
\begin{eqnarray*}
G(t,x,r,I)&=&\lim_{k\rightarrow\infty}E_{t,x,r,I}\bigg\{G(\theta_k,X^*_{\theta_k},r_{\theta_k},I_{\theta_k})\bigg\}\\
&=&E_{t,x,r,I}\bigg\{G(T,X^*(T),r(T),I(T))\bigg\}\\
&=&E\bigg\{\frac{1}{p}(X_T^*)^p|X_t=x,r_t=r,I_t=I\bigg\}.
\end{eqnarray*}

\end{proof}

\section{Numerical analysis}
Now we give some numerical analysis about optimal strategies. For simplicity, we assume the parameters are constant over time interval $t\in [0,T]$, further we take $T=80$, $T_1=120$, $\eta=0.0606$, $b=0.05$, $\rho=-0.06$ and $\beta=0.02$, $\sigma_0=0.01$, $\bar \sigma_0=0.026$, besides, in the comparison of Ho-lee model and Vasicek model showed in figure 3, we let $p=0.5$.

Figure 1 and figure 2 reveal the change of the proportion of the wealth invested in bond with respect to investors who have different attitude toward risk. Parameter $p$, $0<p<1$ in the utility function represent the degree of risk aversion, specifically, the more risk averse the investor is, the larger the parameter is. From figure 1 and figure 2, we know in both models (Ho-lee model and Vasicek model) the insurer should gradually increase the  proportion invested in bond as time elapse. On the other hand, the two figures tell us that an investor who dislike risk will invest less amount of money in bond than the one who like risk.

\begin{center}
\includegraphics[angle=0, width=.6\textwidth]{./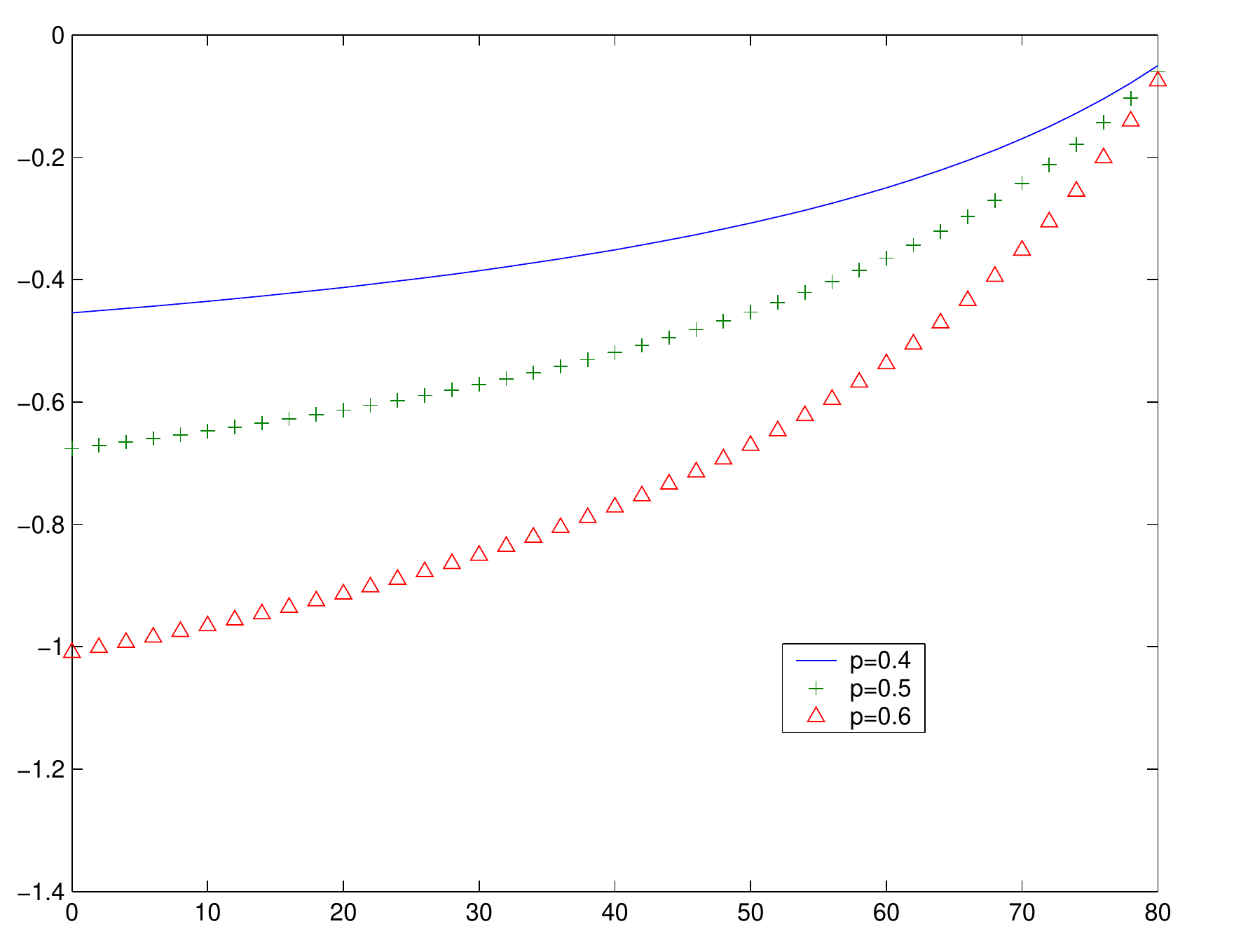}

Figure 1. The case of Ho-Lee model

\end{center}
\begin{center}
\includegraphics[angle=0, width=.6\textwidth]{./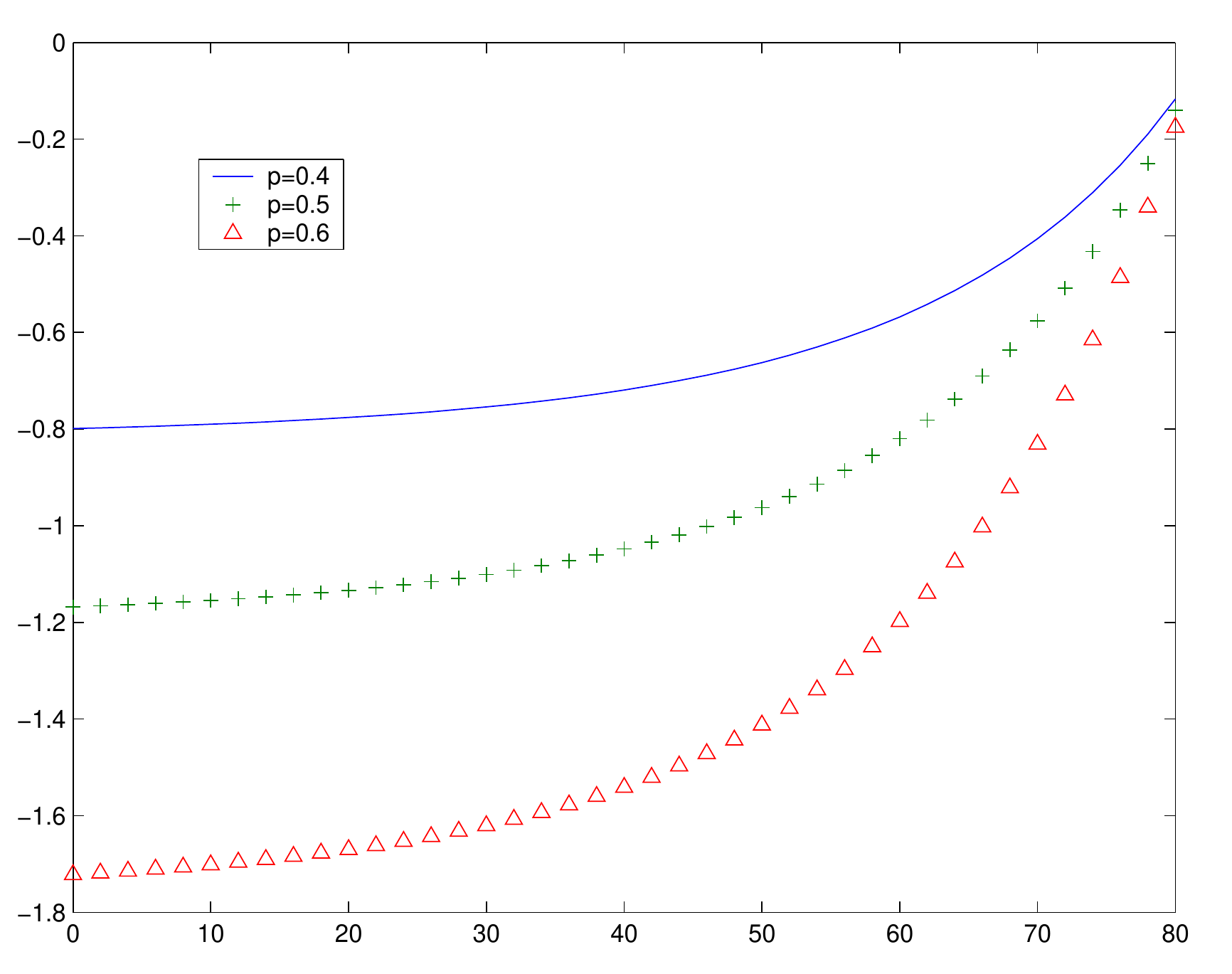}

Figure 2. The case of Vasicek model

\end{center}

Figure 3 shows the comparison of the optimal policies under Ho-lee model and Vasicek model. From image we can see if we use Ho-lee model to describe stochastic interest rate, we will invest more money in bond than the case in which we use Vasicek model to characterize the stochastic interest rate.

\begin{center}
\includegraphics[angle=0, width=.6\textwidth]{./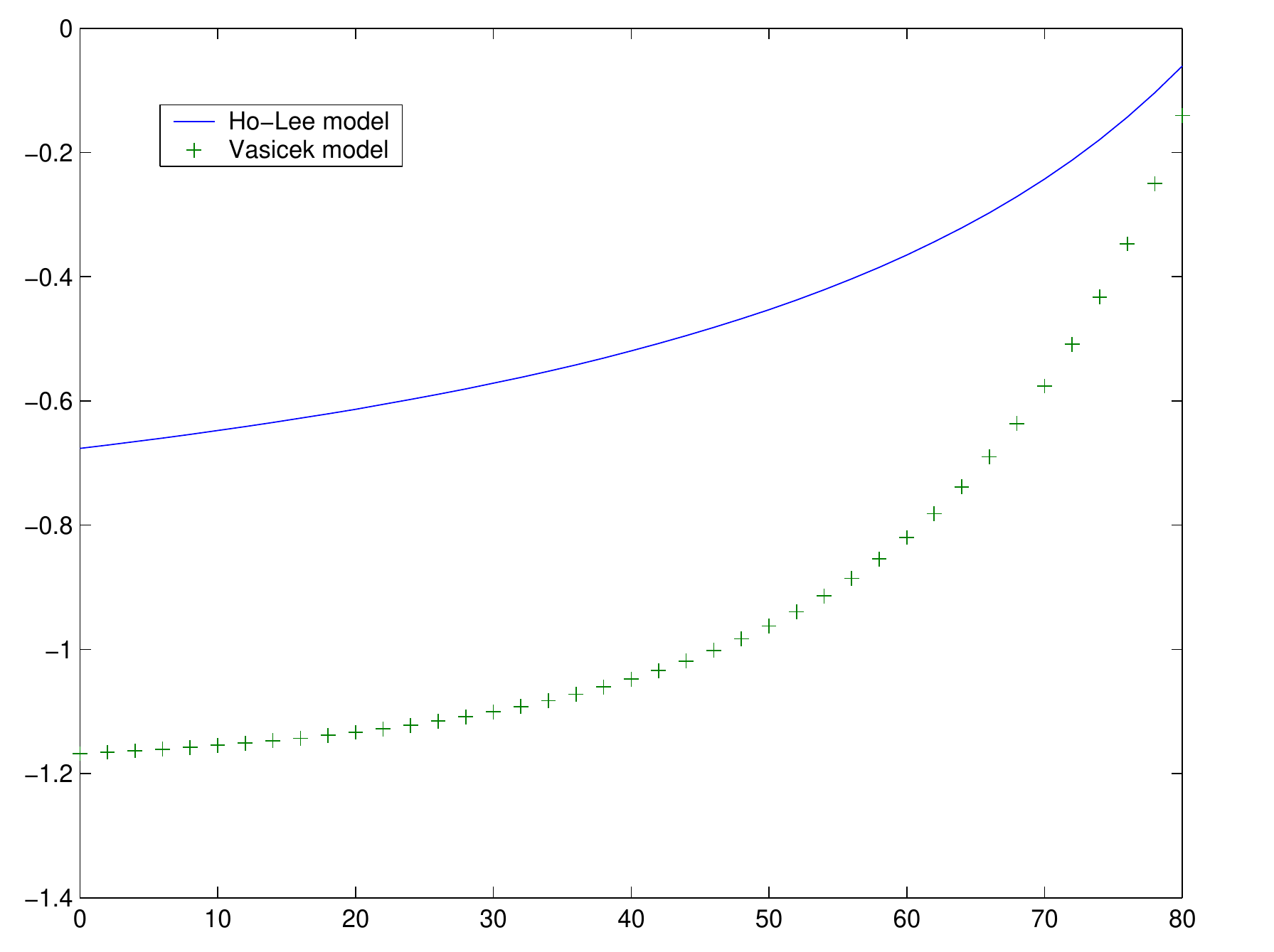}

Figure 3. The comparison of Ho-lee model and Vasicek model in the position of bond

\end{center}

\section{Conclusion}
In this paper, we studied the optimal investment-reinsurance problem under long-term prospective. At first, we established the model which considered the inflation risk and interest risk. Second under the criterion of maximizing the terminal utility function, we obtained the closed-form expression of the optimal strategy for both Ho-lee model and Vasicek model, and by proving corresponding verification theorem, we know the function we obtained is the value function. We also given numerical illustrations on the optimal control strategies.


\end{document}